\newtheorem{theorem}{Theorem}
\newtheorem{definition}[theorem]{Definition}
\newtheorem{example}[theorem]{Example}
\newtheorem{proposition}[theorem]{Proposition}
\newtheorem{remark}[theorem]{Remark}
\newenvironment{proof}[1][Proof]{\noindent\textbf{#1.} }{\ \rule{0.5em}{0.5em}}
\begin{document}

\title{Spectral parameter power series method for discontinuous coefficients}
\author{Herminio Blancarte$^{\text{1}}$, Hugo M. Campos$^{\text{1,2}}$, Kira
V. Khmelnytskaya$^{\text{1}}$ \\
$^{\text{1}}${\small Faculty of Engineering, Autonomous University of
Queretaro, }\\
{\small Centro Universitario}, {\small Cerro de las Campanas s/n, C.P.
76010, Santiago de Quer\'{e}taro, Qro., M\'{e}xico}\\
$^{\text{2}}${\small Department of Mathematics, FCFM, Benem\'{e}rita
Universidad Aut\'{o}noma de Puebla, }\\
{\small Av. San Claudio y 18 sur San Manuel CU, CP. 72570 Puebla, Pue.,
Mexico}\\
{\small email: herbs@uaq.mx, hugomcampos@hotmail.com, khmel@uaq.edu.mx%
\thanks{%
Research was supported by CONACYT, Mexico via the research projects 166141
and 176987. The second named author additionally acknowledges the support by
FCT, Portugal.}}}
\maketitle

\begin{abstract}
Let $(a,b)$ be a finite interval and $1/p$, $q$, $r\in L^{1}[a,b]$. We show
that a general solution (in the weak sense) of the equation $(pu^{\prime
})^{\prime }+qu=\lambda ru$ on $(a,b)$ can be constructed in terms of power
series of the spectral parameter $\lambda $. The series converge uniformly
on $[a,b]$ and the corresponding coefficients are constructed by means of a
simple recursive procedure. We use this representation to solve different
types of eigenvalue problems. Several numerical tests are discussed.
\end{abstract}

\section{Introduction}

The spectral parameter power series (SPPS) method proposed in \cite{Krv} as
an application of pseudoanalytic function theory and developed into a
numerical technique in \cite{[KP]} rapidly became an important and efficient
tool for solving a variety of problems involving Sturm-Liouville equations.
In \cite{BarBlKr}, \cite{BarKrR}, \cite{CastKrOR}, \cite{KhKrR}, \cite{KrTV}%
, \cite{KrV} the SPPS method was used for solving spectral problems for
Sturm-Liouville equations, in \cite{CKKO} the electromagnetic scattering
problem was studied, in \cite{KKB2012} for fourth-order Sturm-Liouville
equations and in \cite{CKT} to study interesting mapping properties of
transmutation operators.

In previous publications dedicated to the SPPS method the coefficients of
the differential equations were assumed to be continuous functions. Only in 
\cite{KhS} with a short explanation there were considered models involving
discontinuous coefficients. However, obviously, many different applications
require considering piecewise continuos or only integrable coefficients.

In the present paper we obtain the SPPS representation for solutions of the
Sturm-Liouville equation with coefficients from the space $L_{1}\left[ a,b%
\right] $ (the precise conditions can be found in Section 3). The proof in
this case required to find different ways to obtain estimates for the
summands of the SPPS series as well as to prove the convergence of the
series in the corresponding norms (especially for the series defining the
derivatives of the solutions). Though the proofs resulted to be quite
different and more elaborate as compared to the SPPS representations in the
case of continuous coefficients, in general the SPPS approach did not suffer
any serious modification and the main result of this work consists in the
fact that the SPPS method is now available in the much more general
situation of discontinuous coefficients.

The paper is structured as follows. In Section 2 we introduce preliminary
facts and definitions concerning weak solutions of the Sturm-Liouville
equation and absolutely continuous functions and we also establish some
auxiliary results. In Section 3 we prove the main result of the paper, the
SPPS representation for solutions of the Sturm-Liouville equation with
discontinuous coefficients (Theorem \ref{main_theorem}). Section 4 is
dedicated to the numerical implementation of the SPPS method and numerical
examples involving spectral problems for Sturm-Liouville equations with
discontinuous (and, in general, complex) coefficients. Excellent performance
of the method is illustrated.

\section{Preliminaries}

First let us introduce some classical function spaces that will be needed
throughout the paper. Let $L^{1}[a,b]$ be the Lebesgue space of absolutely
integrable functions on $[a,b]$. As usual\ the Sobolev space $W^{1,1}[a,b]$
is formed by functions $u\in L^{1}[a,b]$ for which there exists $g\in
L^{1}[a,b]$ such that 
\begin{equation*}
\int\limits_{a}^{b}u\varphi ^{\prime }dt=-\int\limits_{a}^{b}g\varphi
dt,\qquad \forall \varphi \in C_{c}^{\infty }(a,b),
\end{equation*}%
where $C_{c}^{\infty }(a,b)$ is the space of infinitely differentiable
functions on $(a,b)$ with compact support in $(a,b)$. The function $g$ is
called distributional (or weak) derivative of $u$ and is denoted by $%
u^{\prime }$.

Denote the set of complex-valued absolutely continuous functions on $[a,b]$
by $AC\left[ a,b\right] $. We recall that a function $V$ belongs to $AC\left[
a,b\right] $ iff there exists $v\in L^{1}[a,b]$ and $x_{0}\in \lbrack a,b]$
such that%
\begin{equation}
V(x)=\int_{x_{0}}^{x}v(t)dt+V(x_{0})\text{, \ \ }\forall \,x\in \lbrack a,b]
\label{AC}
\end{equation}%
(see, e.g., \cite{Benedetto}, \cite{Natanson}, \cite{Kolmogorov}). Thus, if $%
V\in AC[a,b]$ then $V\in C[a,b]$ and $V$ is a.e. differentiable with $%
V^{\prime }(x)=v(x)$ on $[a,b]$. Moreover, the usual derivative of a
function from $AC[a,b]$ coincides a.e. with its distributional derivative 
\cite{Kolmogorov}, thus $AC[a,b]\subseteq W^{1,1}[a,b]$. In fact the linear
spaces $W^{1,1}[a,b]$ and $AC[a,b]$ coincide (see \cite{Haim Brezis},
Theorem 8.2) in the following sense: $u\in W^{1,1}[a,b]$ iff there exists $%
V\in AC[a,b]$ such that $u=V$ a.e. on $[a,b]$.

Let us introduce the concept of weak solution corresponding to the equation 
\begin{equation}
(pu^{\prime })^{\prime }+qu=0.  \label{SOE}
\end{equation}

First assume that $p\in C^{1}[a,b]$ and $q\in C[a,b]$. Let $u\in C^{2}[a,b]$
be a classical solution of (\ref{SOE}). Multiplying (\ref{SOE}) by $\varphi
\in C_{c}^{\infty }[a,b]$ and integrating by parts we arrive at the equality%
\begin{equation}
-\int\limits_{a}^{b}pu^{\prime }\varphi ^{\prime
}+\int\limits_{a}^{b}qu\varphi =0,\quad \forall \varphi \in C_{c}^{\infty
}[a,b]  \label{weakSOE}
\end{equation}%
where we have used that $\varphi (a)=\varphi (b)=0.$

\begin{definition}
\label{Def weak}Let $p$ be a measurable function on $[a,b]$ and $q\in
L^{1}[a,b]$. A function $u\in AC[a,b]=W^{1,1}[a,b]$ is called a weak
solution of (\ref{SOE}) if $pu^{\prime }\in L^{1}[a,b]$ and $u$ satisfies (%
\ref{weakSOE}).
\end{definition}

\begin{proposition}
\label{Prop_weak}Under the conditions of the above definition a function $%
u\in AC[a,b]$ is a weak solution of (\ref{SOE}) iff $pu^{\prime }\in AC[a,b]$
and (\ref{SOE}) is satisfied a.e. on $[a,b].$
\end{proposition}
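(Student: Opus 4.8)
The plan is to prove the equivalence in both directions, using the fundamental characterization of absolutely continuous functions via formula (\ref{AC}) together with the standard "fundamental lemma of the calculus of variations" in its $L^1$ form.

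\textbf{Sufficiency.} Suppose $pu^{\prime }\in AC[a,b]$ and $(pu^{\prime })^{\prime }+qu=0$ a.e. on $[a,b]$. Let $\varphi \in C_{c}^{\infty }[a,b]$. Since $pu^{\prime }\in AC[a,b]$, its usual derivative coincides a.e. with its distributional derivative, and $u^{\prime }$ likewise. Hence $pu^{\prime }$ and $\varphi$ are both absolutely continuous, so the integration-by-parts formula for $AC$ functions applies:
\begin{equation*}
\int\limits_{a}^{b}(pu^{\prime })^{\prime }\varphi =-\int\limits_{a}^{b}pu^{\prime }\varphi ^{\prime }+\left[ pu^{\prime }\varphi \right] _{a}^{b}=-\int\limits_{a}^{b}pu^{\prime }\varphi ^{\prime },
\end{equation*}
the boundary term vanishing because $\varphi (a)=\varphi (b)=0$. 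Substituting $(pu^{\prime })^{\prime }=-qu$ a.e. and using that equal-a.e. functions have equal integrals gives exactly (\ref{weakSOE}); note $qu\in L^{1}$ since $q\in L^{1}$ and $u\in C[a,b]$ is bounded, so all integrals are finite, and $u$ is automatically a weak solution in the sense of Definition \ref{Def weak}.

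\textbf{Necessity.} Suppose $u$ is a weak solution, i.e. $u\in AC[a,b]$, $pu^{\prime }\in L^{1}[a,b]$, and (\ref{weakSOE}) holds. Define $Q(x):=\int_{a}^{x}q(t)u(t)\,dt$; since $qu\in L^{1}[a,b]$ this is well defined and $Q\in AC[a,b]$ with $Q^{\prime }=qu$ a.e. For any $\varphi \in C_{c}^{\infty }[a,b]$, integration by parts (valid for the $AC$ function $Q$) gives $\int_{a}^{b}qu\,\varphi =\int_{a}^{b}Q^{\prime }\varphi =-\int_{a}^{b}Q\varphi ^{\prime }$. Plugging into (\ref{weakSOE}) yields
\begin{equation*}
\int\limits_{a}^{b}\left( pu^{\prime }+Q\right) \varphi ^{\prime }=0,\qquad \forall \varphi \in C_{c}^{\infty }[a,b].
\end{equation*}
Now apply the du Bois-Reymond lemma: if an $L^{1}$ function has vanishing integral against all $\varphi ^{\prime }$ with $\varphi \in C_{c}^{\infty }$, then it equals a constant a.e. Hence $pu^{\prime }+Q=c$ a.e. for some constant $c$, so $pu^{\prime }=c-Q$ a.e., and since $c-Q\in AC[a,b]$ we conclude $pu^{\prime }\in AC[a,b]$ (after modification on a null set). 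Differentiating the identity $pu^{\prime }=c-Q$ gives $(pu^{\prime })^{\prime }=-Q^{\prime }=-qu$ a.e., i.e. (\ref{SOE}) holds a.e.

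\textbf{Main obstacle.} The one point requiring care is the du Bois-Reymond step: the standard fundamental lemma says that if $\int_{a}^{b}f\varphi =0$ for all $\varphi \in C_{c}^{\infty }$ then $f=0$ a.e., but here the test functions are the \emph{derivatives} $\varphi ^{\prime }$, which do not range over all of $C_{c}^{\infty }[a,b]$ — they are exactly those smooth functions with integral zero. The resolution is classical: fix $\psi _{0}\in C_{c}^{\infty }[a,b]$ with $\int_{a}^{b}\psi _{0}=1$, and for arbitrary $\chi \in C_{c}^{\infty }[a,b]$ set $\varphi (x):=\int_{a}^{x}\bigl(\chi (t)-\psi _{0}(t)\int_{a}^{b}\chi \bigr)\,dt$, which lies in $C_{c}^{\infty }[a,b]$ and satisfies $\varphi ^{\prime }=\chi -\psi _{0}\int_{a}^{b}\chi $. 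Then $0=\int_{a}^{b}(pu^{\prime }+Q)\varphi ^{\prime }=\int_{a}^{b}(pu^{\prime }+Q-c)\chi$ where $c:=\int_{a}^{b}(pu^{\prime }+Q)\psi _{0}$, and since $\chi$ is arbitrary the ordinary fundamental lemma gives $pu^{\prime }+Q=c$ a.e. The rest is bookkeeping with the identification $AC[a,b]=W^{1,1}[a,b]$ recalled in the preliminaries.
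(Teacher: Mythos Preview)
Your proof is correct and follows the same underlying idea as the paper's. The paper's argument is just more compressed: for the forward direction it observes that (\ref{weakSOE}) is exactly the statement that the distributional derivative of $pu'\in L^{1}$ equals $-qu\in L^{1}$, so $pu'\in W^{1,1}[a,b]=AC[a,b]$ by the identification recalled in the preliminaries, and the equation holds a.e.; your explicit du Bois-Reymond construction is precisely what proves that identification, so you have unpacked the black box rather than cited it. The sufficiency direction is handled the same way in both---via the fact that the pointwise and distributional derivatives of an $AC$ function coincide.
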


\begin{proof}
Let $u\in AC[a,b]$ be a weak solution of (\ref{SOE}). Equality (\ref{weakSOE}%
) means that the distributional derivative of $pu^{\prime }\in L^{1}[a,b]$
is $-qu\in L^{1}[a,b]$. Thus $pu^{\prime }\in AC[a,b]$ and (\ref{SOE}) is
satisfied a.e. on $[a,b].$

The opposite statement follows from the fact that the usual derivative of a
function from $AC[a,b]$ coincides with its distributional derivative.
\end{proof}

The following result will be useful further.

\begin{proposition}
\label{Prop_conv}Let $\left\{ V_{n}\right\} _{n=0}^{\infty }$ be a sequence
of absolutely continuous functions on $[a,b]$. If the series $%
\sum_{n=0}^{\infty }V_{n}(x_{0})$ converges at some $x_{0}\in \left[ a,b%
\right] $ and the series $\sum_{n=0}^{\infty }V_{n}^{\prime }$ converges to $%
v\in L^{1}[a,b]$ in the norm of $L^{1}[a,b]$, then $\sum_{n=0}^{\infty
}V_{n} $ converges uniformly to $V\in AC\left[ a,b\right] $ and $V^{\prime
}=v$ a.e.
\end{proposition}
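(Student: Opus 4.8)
The plan is to exploit the integral representation (\ref{AC}) of absolutely continuous functions and to observe that $L^{1}$-convergence of the derivative series is exactly what is needed to control the primitives uniformly.

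First I would write, for each $n$, the identity $V_{n}(x)=V_{n}(x_{0})+\int_{x_{0}}^{x}V_{n}^{\prime }(t)\,dt$ for all $x\in [a,b]$, which is legitimate since $V_{n}\in AC[a,b]$ and $x_{0}$ is the point at which the scalar series converges. Writing $S_{N}=\sum_{n=0}^{N}V_{n}$ and $\sigma _{N}=\sum_{n=0}^{N}V_{n}^{\prime }$, linearity of the integral gives $S_{N}(x)=\sum_{n=0}^{N}V_{n}(x_{0})+\int_{x_{0}}^{x}\sigma _{N}(t)\,dt$ for every $x\in [a,b]$.

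Next I would pass to the limit in each of the two terms. By hypothesis $\sum_{n=0}^{N}V_{n}(x_{0})\to c:=\sum_{n=0}^{\infty }V_{n}(x_{0})$. For the integral term I would use the elementary estimate
\[
\left| \int_{x_{0}}^{x}\sigma _{N}(t)\,dt-\int_{x_{0}}^{x}v(t)\,dt\right| \le \int_{a}^{b}\left| \sigma _{N}(t)-v(t)\right| \,dt=\left\| \sigma _{N}-v\right\| _{L^{1}[a,b]},
\]
valid for every $x\in [a,b]$, whose right-hand side tends to $0$ by hypothesis. Since the bound does not depend on $x$, it follows that $S_{N}\to V$ uniformly on $[a,b]$, where $V(x):=c+\int_{x_{0}}^{x}v(t)\,dt$.

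Finally, since $v\in L^{1}[a,b]$, the limit $V$ is precisely of the form (\ref{AC}), so $V\in AC[a,b]$ and $V^{\prime }(x)=v(x)$ for a.e.\ $x\in [a,b]$. I do not anticipate a genuine obstacle: the entire content is the remark that the $L^{1}$-norm of a function dominates the sup-norm of its indefinite integral, so the (a priori only $L^{1}$) convergence of the derivative series automatically upgrades to uniform convergence of the primitive series. The only point to keep straight is that $x_{0}$ is a fixed but otherwise arbitrary point of $[a,b]$, which causes no difficulty since the estimate above is uniform in both $x$ and the choice of base point.
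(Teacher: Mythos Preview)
Your proof is correct and follows essentially the same route as the paper: both arguments write each $V_{n}$ via the integral representation (\ref{AC}) with base point $x_{0}$, split the partial sum into the scalar part $\sum_{n\le N}V_{n}(x_{0})$ plus the indefinite integral of $\sigma_{N}$, and then bound the integral term uniformly in $x$ by $\|\sigma_{N}-v\|_{L^{1}[a,b]}$. The only cosmetic difference is that the paper defines the limit $V$ up front and estimates $|V-S_{N}|$, whereas you define $V$ after showing $S_{N}$ converges; the content is the same.
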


This is in fact a well-known result when the functions $V_{n}$ are
continuously differentiable \cite{Apostol}. Also, it is not hard to see that
Proposition \ref{Prop_conv} is equivalent to the fact that the space $%
AC[a,b] $ equipped with the norm given by $\left\Vert u\right\Vert
=\left\vert u(x_{0})\right\vert +\left\Vert u^{\prime }\right\Vert _{L^{1}}$
is a Banach space. This result is mentioned and used in many sources (see,
e.g., \cite{Matkowski}), but since we know no reference where a detailed
rigorous proof can be found, we provide one here.

\begin{proof}
Consider the absolutely continuous function 
\begin{equation}
V(x):=\int_{x_{0}}^{x}v(t)dt+\sum_{n=0}^{\infty }V_{n}(x_{0}).  \label{defv}
\end{equation}%
Using the representation (\ref{AC}) for $V_{n}$ we find that%
\begin{equation*}
\left\vert V(x)-\sum_{n=0}^{N}V_{n}(x)\right\vert =\left\vert
\int_{x_{0}}^{x}v(t)dt+V(x_{0})-\sum_{n=0}^{N}\left(
\int_{x_{0}}^{x}V_{n}^{\prime }(t)dt+V_{n}(x_{0})\right) \right\vert \leq
\end{equation*}%
\begin{equation*}
\leqslant \left\vert V(x_{0})-\sum_{n=0}^{N}V_{n}(x_{0})\right\vert
+\left\vert
\int_{x_{0}}^{x}v(t)dt-\sum_{n=0}^{N}\int_{x_{0}}^{x}V_{n}^{\prime
}(t)dt\right\vert =
\end{equation*}%
\begin{equation*}
=\left\vert V(x_{0})-\sum_{n=0}^{N}V_{n}(x_{0})\right\vert +\left\vert
\int_{a}^{b}\left( v(t)-\sum_{n=0}^{N}V_{n}^{\prime }(t)\right)
dt\right\vert \leq
\end{equation*}%
\begin{equation*}
\leq \left\vert V(x_{0})-\sum_{n=0}^{N}V_{n}(x_{0})\right\vert
+\int_{a}^{b}\left\vert v(t)-\sum_{n=0}^{N}V_{n}^{\prime }(t)\right\vert dt
\end{equation*}%
When $N\rightarrow \infty $ ,\ the last expression tends to zero since $%
\sum_{k=0}^{\infty }V_{n}(x_{0})=V(x_{0})$ and $\sum_{n=0}^{\infty
}V_{n}^{\prime }$ converges to $v$\ in the norm of $L^{1}[a,b]$. Thus, $%
\sum_{n=0}^{\infty }V_{n}(x)$ converges uniformly to $V(x)$ and from (\ref%
{defv}) it follows that $V^{\prime }(x)=v(x)=\sum_{k=0}^{\infty
}V_{n}^{\prime }(x)$ a.e..
\end{proof}

\section{SPPS representation for solutions of the Sturm-Liouville equation}

Consider the Sturm-Liouville equation%
\begin{equation}
(pu^{\prime })^{\prime }+qu=\lambda ru  \label{SLlambda}
\end{equation}%
on some finite interval $(a,b)$ where $1/p$, $q$, $r\in L^{1}[a,b]$ and $%
\lambda \in \mathbb{C}$ is the spectral parameter. Following \cite{Zettl} we
define the operator $L[u]:=(pu^{\prime })^{\prime }+qu$ with domain of
definition given by $D_{L}=\left\{ u:u\in AC[a,b]\text{ and }pu^{\prime }\in
AC[a,b]\right\} $. By a solution of equation (\ref{SLlambda}) we mean a
function $u$ which belongs to $D_{L}$ and satisfies $L[u]=\lambda ru$ $a.e.$
on $(a,b)$. It follows from Proposition \ref{Prop_weak} that this definition
of solution is equivalent to that one introduced in Definition \ref{Def weak}%
.

\begin{proposition}[Polya Factorization]
\label{Polya_fact} Let $f$ be a nonvanishing solution of $L[f]=0$. Then for $%
u\in D_{L}$ the following equality%
\begin{equation}
L[u]=\frac{1}{f}\frac{d}{dx}pf^{2}\frac{d}{dx}\frac{1}{f}u\text{ }
\label{Polyafact}
\end{equation}%
holds $a.e.$ on $(a,b)$.
\end{proposition}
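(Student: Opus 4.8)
The plan is to verify the identity (\ref{Polyafact}) by expanding the right-hand side and showing it reduces to $L[u]=(pu')'+qu$, using only the hypothesis that $f$ is a nonvanishing solution of $L[f]=0$. First I would record the regularity needed so that every derivative below makes sense: since $f\in D_L$ and $f$ is nonvanishing on $[a,b]$ (hence bounded away from zero, $f$ being continuous on a compact interval), the functions $1/f$ and $1/f^2$ are in $AC[a,b]$, and $pf^2=(pf')\cdot f$ is a product of two $AC$ functions, hence in $AC[a,b]$. Likewise, for $u\in D_L$ the function $u/f$ is a product of $AC$ functions and thus in $AC[a,b]$, so $(u/f)'$ exists a.e.\ and the combination $pf^2(u/f)'$ will be shown to lie in $AC[a,b]$; all products and applications of the Leibniz rule below are justified by the fact that a product of $AC$ functions is $AC$ and differentiation a.e.\ obeys the product rule.

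The computational core is to expand $\dfrac{1}{f}\dfrac{d}{dx}\Bigl(pf^2\dfrac{d}{dx}\dfrac{u}{f}\Bigr)$. I would first compute $\dfrac{d}{dx}\dfrac{u}{f}=\dfrac{u'f-uf'}{f^2}$ a.e., so that $pf^2\dfrac{d}{dx}\dfrac{u}{f}=p(u'f-uf')=pu'f-puf'$ a.e. Then differentiate: $\dfrac{d}{dx}\bigl(pu'f-puf'\bigr)=(pu')'f+pu'f'-(pf')'u-pf'u' = (pu')'f-(pf')'u$ a.e., where the two $pu'f'$ terms cancel. Finally multiply by $1/f$ to obtain $(pu')'-\dfrac{(pf')'}{f}u$. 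Now invoke $L[f]=0$, i.e.\ $(pf')'+qf=0$ a.e., which gives $(pf')'/f=-q$ a.e., so the expression equals $(pu')'+qu=L[u]$ a.e., as claimed.

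The main obstacle — really the only subtlety — is making the intermediate differentiation steps legitimate: one must know that $pf^2(u/f)'$ is itself absolutely continuous before differentiating it a second time, since a.e.\ differentiation does not generally survive composition unless the intermediate function is $AC$. This is handled by the algebraic simplification: $pf^2(u/f)' = pu'f-puf'$ a.e., and the right-hand side is a sum of products of $AC$ functions ($pu'\in AC$ since $u\in D_L$; $f,f'\!\cdot\! u$ with the caveat that $f'$ alone need not be $AC$, but $pf'\in AC$ and $f'=(pf')/p$ — here one uses instead that $puf' = u\cdot(pf')/1$ is not obviously $AC$). To avoid this pitfall I would keep the factor $p$ grouped with $f'$ throughout, writing $pf^2(u/f)' = (pu')f - (pf')u$, now manifestly a sum of products of $AC$ functions and hence in $AC[a,b]$; differentiating this a.e.\ by the product rule is then fully justified, the cross terms $pu'f' $ do not appear in this grouping at all, and one directly gets $\bigl((pu')f-(pf')u\bigr)' = (pu')'f - (pf')'u$ a.e. Dividing by the $AC$, nonvanishing function $f$ completes the argument.
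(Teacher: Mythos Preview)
The paper states this proposition without proof, so there is nothing to compare against directly; your direct-computation approach is the standard one and is essentially correct.

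Two small slips to clean up. First, the equality ``$pf^{2}=(pf')\cdot f$'' is false, and in fact $pf^{2}$ need not lie in $AC[a,b]$ at all since nothing is assumed about $p$ itself (only $1/p\in L^{1}$). Fortunately this claim is never used: what you actually need, and correctly establish, is that the a.e.-defined function $pf^{2}(u/f)'$ agrees a.e.\ with $(pu')f-(pf')u$, which \emph{is} in $AC[a,b]$ as a sum of products of $AC$ functions. That is the right way to justify the second differentiation, and you have it. Second, in the last paragraph you assert that with the grouping $(pu')f-(pf')u$ ``the cross terms $pu'f'$ do not appear \dots\ at all.'' They do appear: the product rule gives $\bigl((pu')f\bigr)'=(pu')'f+(pu')f'$ and $\bigl((pf')u\bigr)'=(pf')'u+(pf')u'$, and the cross terms $(pu')f'=pu'f'$ and $(pf')u'=pf'u'$ cancel, exactly as in your first computation. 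The conclusion $\bigl((pu')f-(pf')u\bigr)'=(pu')'f-(pf')'u$ a.e.\ is correct, just for this cancellation reason rather than by absence.

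With those two cosmetic fixes the argument is complete: divide by the nonvanishing $AC$ function $f$ and use $(pf')'=-qf$ a.e.\ to obtain $(pu')'+qu=L[u]$ a.e.
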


Let $f\in AC[a,b]$ and $f(x)\neq 0$ for any $x\in \lbrack a,b]$. Following 
\cite{[KP]} let us introduce two families of functions $\widetilde{X}%
^{(n)}(x)$ and $X^{(n)}(x)$ by the recursive equalities%
\begin{equation*}
\widetilde{X}^{(0)}\equiv 1,\qquad X^{(0)}\equiv 1,
\end{equation*}

\begin{equation}
\widetilde{X}^{(n)}(x)=%
\begin{cases}
\int_{x_{0}}^{x}\widetilde{X}^{(n-1)}(s)r(s)f^{2}(s)ds\text{,}\qquad \text{%
for\thinspace an\thinspace odd}\,\ n \\ 
\\ 
\int_{x_{0}}^{x}\widetilde{X}^{(n-1)}(s)\frac{ds}{p(s)f^{2}(s)}\text{,}%
\qquad \text{for\thinspace an\thinspace even }\,n%
\end{cases}
\label{K1}
\end{equation}%
\begin{equation}
X^{(n)}(x)=%
\begin{cases}
\int_{x_{0}}^{x}X^{(n-1)}(s)\frac{ds}{p(s)f^{2}(s)}\text{,}\qquad \text{%
for\thinspace an\thinspace odd }n \\ 
\\ 
\int_{x_{0}}^{x}X^{(n-1)}(s)r(s)f^{2}(s)ds\text{,}\qquad \text{for\thinspace
an\thinspace even}\,n~.%
\end{cases}
\label{K2}
\end{equation}%
where $x_{0}$ is an arbitrary point in $[a,b]$. Below we show that the
introduced families of functions are closely related to the Sturm-Liouville
equation (\ref{SLlambda}). For this the following proposition will be used.

\begin{proposition}
\label{Xestimates}Under the above conditions the functions $\widetilde{X}%
^{(n)}$ and $X^{(n)}$ belong to $AC[a,b]$ and the following estimates hold%
\begin{equation}
\left\vert X^{(2n)}(x)\right\vert ,\left\vert \widetilde{X}%
^{(2n)}(x)\right\vert \leq \frac{(C_{1}C_{2})^{n}}{n!n!},  \label{estim_even}
\end{equation}%
\begin{equation}
\left\vert X^{(2n-1)}(x)\right\vert \leq \frac{C_{1}^{n}}{n!}\frac{%
C_{2}^{n-1}}{\left( n-1\right) !},\quad \left\vert \widetilde{X}%
^{(2n-1)}(x)\right\vert \leq \frac{C_{1}^{n-1}}{\left( n-1\right) !}\frac{%
C_{2}^{n}}{n!},  \label{estim_odd}
\end{equation}%
where $C_{1}=\left\Vert 1/(pf^{2})\right\Vert _{L^{1}[a,b]}$, $%
C_{2}=\left\Vert rf^{2}\right\Vert _{L^{1}[a,b]}$ and $n=0,1,2,...$
\end{proposition}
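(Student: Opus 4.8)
The natural approach is induction on $n$, proving the four bounds simultaneously. First I would observe that absolute continuity of $\widetilde{X}^{(n)}$ and $X^{(n)}$ is automatic: each is defined as an indefinite integral $\int_{x_0}^x g(s)\,ds$ of a function $g\in L^1[a,b]$, provided the integrand lies in $L^1$. The integrands are $\widetilde{X}^{(n-1)}(s)\,r(s)f^2(s)$ and $\widetilde{X}^{(n-1)}(s)/(p(s)f^2(s))$ (and similarly for $X$); since by induction $\widetilde{X}^{(n-1)}$ is continuous on $[a,b]$ (being in $AC$) and hence bounded, and since $rf^2\in L^1[a,b]$ and $1/(pf^2)\in L^1[a,b]$ by hypothesis, the products are in $L^1[a,b]$. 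Thus each newly defined function is in $AC[a,b]$, closing the $AC$-membership induction; the estimates below will supply explicit bounds for $\|\widetilde{X}^{(n-1)}\|_\infty$.

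For the estimates the key device is an iterated-integral bound. The base cases $n=0$ give $|X^{(0)}|=|\widetilde{X}^{(0)}|=1=(C_1C_2)^0/(0!0!)$. For the inductive step, consider a typical odd-index function, say $X^{(2n-1)}(x)=\int_{x_0}^x X^{(2n-2)}(s)\,\frac{ds}{p(s)f^2(s)}$. Using the inductive bound $|X^{(2n-2)}(s)|\le (C_1C_2)^{n-1}/((n-1)!)^2$ and pulling it out of the integral,
\begin{equation*}
|X^{(2n-1)}(x)|\le \frac{(C_1C_2)^{n-1}}{((n-1)!)^2}\int_a^b\frac{ds}{|p(s)f^2(s)|}=\frac{(C_1C_2)^{n-1}}{((n-1)!)^2}\,C_1=\frac{C_1^n}{n!}\cdot\frac{C_2^{n-1}}{(n-1)!}\cdot\frac{n!}{((n-1)!)}\cdot\frac{1}{(n-1)!},
\end{equation*}
wait — this crude bound only yields $C_1^n C_2^{n-1}/((n-1)!)^2$, which is weaker than the claimed $C_1^n C_2^{n-1}/(n!(n-1)!)$ by a factor of $n$. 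To recover the sharper factorial in the denominator one cannot simply bound $|X^{(2n-2)}|$ by its maximum; instead one must track the dependence on the upper limit. The right tool is to prove, by induction, the stronger pointwise statements
\begin{equation*}
|X^{(2n)}(x)|,\ |\widetilde{X}^{(2n)}(x)|\le \frac{\bigl(\sigma_1(x)\,\sigma_2(x)\bigr)^{n}}{n!\,n!},\qquad
|X^{(2n-1)}(x)|\le \frac{\sigma_1(x)^{n}}{n!}\frac{\sigma_2(x)^{n-1}}{(n-1)!},
\end{equation*}
(and the symmetric one for $\widetilde X^{(2n-1)}$), where $\sigma_1(x)=\int_{x_0}^{x}\!|1/(pf^2)|$ and $\sigma_2(x)=\int_{x_0}^{x}\!|rf^2|$ are the ``partial'' $L^1$-masses; here I am writing the argument for $x\ge x_0$, the case $x\le x_0$ being analogous with the orientation of the integrals reversed. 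Then in the inductive step one uses $d\sigma_1 = |1/(pf^2)|\,ds$, so that
\begin{equation*}
\int_{x_0}^{x}\frac{\sigma_1(s)^{n-1}}{(n-1)!}\,\frac{|ds|}{|p(s)f^2(s)|}=\int_{0}^{\sigma_1(x)}\frac{t^{n-1}}{(n-1)!}\,dt=\frac{\sigma_1(x)^{n}}{n!},
\end{equation*}
which is exactly the mechanism that produces the extra factorial. Chaining the appropriate such steps (alternating $\sigma_1$ and $\sigma_2$ increments according to the parity rule in \eqref{K1}, \eqref{K2}) yields the stronger pointwise bounds, and then the stated estimates follow by taking $x=b$ and noting $\sigma_1(b)\le C_1$, $\sigma_2(b)\le C_2$ (with $|\sigma_i(x)|\le C_i$ for all $x$ regardless of the side of $x_0$).

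The main obstacle, then, is precisely this: the naive supremum bound loses a factor of $n$ at each odd step, so the proof must be set up with the monotone "running mass" functions $\sigma_1,\sigma_2$ rather than with the global constants $C_1,C_2$, and one must keep careful track of which of the two measures $|1/(pf^2)|\,ds$ or $|rf^2|\,ds$ is being integrated against at each stage (this alternation differs between the $X^{(n)}$ and $\widetilde X^{(n)}$ families, which is why the roles of $C_1$ and $C_2$ are swapped in the two lines of \eqref{estim_odd}). Once the bookkeeping is organized — most cleanly by substituting $t=\sigma_i(s)$ to convert each integral into an elementary power integral $\int_0^{\sigma_i(x)} t^{k}/k!\,dt = \sigma_i(x)^{k+1}/(k+1)!$ — the induction goes through routinely, and the $AC$ claim is a free byproduct since each $X^{(n)}$, $\widetilde X^{(n)}$ is exhibited as an indefinite Lebesgue integral of an $L^1$ function.
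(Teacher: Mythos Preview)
Your proposal is correct and follows essentially the same route as the paper: your running masses $\sigma_1,\sigma_2$ are exactly the paper's auxiliary functions $P(x)=\int_{x_0}^{x}|1/(pf^2)|$ and $R(x)=\int_{x_0}^{x}|rf^2|$, the sharper pointwise bounds you propose coincide with the paper's auxiliary inequalities, and the mechanism $\int_{x_0}^{x}\sigma_i(s)^{k-1}\,d\sigma_i(s)=\sigma_i(x)^{k}/k$ is precisely what the paper uses (together with the monotonicity of $P$ and $R$ to pull out the non-matching factor at each step). The only cosmetic difference is that the paper carries out the induction two integrations at a time (from index $2k-2$ to $2k$ and from $2k-3$ to $2k-1$), whereas you suggest chaining single steps through all four families simultaneously; both organizations are equivalent.
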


\begin{proof}
The absolute continuity of the functions $\widetilde{X}^{(n)}(x)$ and $%
X^{(n)}(x)$\ follows immediately from the summability of the integrands \cite%
{Kolmogorov}. Let us prove the inequalities when $x\geq x_{0}$, for $x<x_{0}$
a similar reasoning can be applied. First let us obtain the following
auxiliary inequalities 
\begin{equation}
\left\vert X^{(2n)}(x)\right\vert ,\left\vert \widetilde{X}%
^{(2n)}(x)\right\vert \leq \frac{P(x)^{n}R(x)^{n}}{n!n!},
\label{estim_even_pr}
\end{equation}

\begin{equation}
\left\vert X^{(2n-1)}(x)\right\vert \leq \frac{P(x)^{n}}{n!}\frac{R(x)^{n-1}%
}{\left( n-1\right) !},\quad \left\vert \widetilde{X}^{(2n-1)}(x)\right\vert
\leq \frac{P(x)^{n-1}}{\left( n-1\right) !}\frac{R(x)^{n}}{n!},
\label{estim _odd_pr}
\end{equation}%
$x\geq x_{0}$, where%
\begin{equation}
P(x)=\int_{x_{0}}^{x}\frac{ds}{\left\vert p(s)f^{2}(s)\right\vert },\qquad
R(x)=\int_{x_{0}}^{x}\left\vert r(s)f^{2}(s)\right\vert ds.
\label{defPand R}
\end{equation}%
As $P$ and $R$ are monotonically increasing then $P(x)\leq C_{1}$ and $%
R(x)\leq C_{2}$ thus (\ref{estim_even}) and (\ref{estim_odd}) follows from (%
\ref{estim_even_pr}) and (\ref{estim _odd_pr}). The proof of (\ref%
{estim_even_pr}) and (\ref{estim _odd_pr}) is by induction. First we
consider $\widetilde{X}^{(2n)}$. For $n=0$ the estimate (\ref{estim_even_pr}%
) is trivial. Suppose that (\ref{estim_even_pr}) is valid for $n=k-1$,%
\begin{equation*}
\left\vert \widetilde{X}^{(2k-2)}(x)\right\vert \leq \frac{%
P(x)^{k-1}R(x)^{k-1}}{\left( k-1\right) !\left( k-1\right) !}.
\end{equation*}%
Then for $n=k$ we have

\begin{equation*}
\left\vert \widetilde{X}^{(2k)}(x)\right\vert \leq \int_{x_{0}}^{x}\frac{%
\left\vert \widetilde{X}^{(2k-1)}(s)\right\vert }{\left\vert
p(s)f^{2}(s)\right\vert }ds\leq \int_{x_{0}}^{x}\frac{1}{\left\vert
p(s)f^{2}(s)\right\vert }\int_{x_{0}}^{s}\left\vert \widetilde{X}%
^{(2k-2)}(t)\right\vert \left\vert r(t)f^{2}(t)\right\vert dtds\leq
\end{equation*}%
\begin{eqnarray*}
\int_{x_{0}}^{x}\frac{1}{\left\vert p(s)f^{2}(s)\right\vert }\int_{x_{0}}^{s}%
\frac{P(t)^{k-1}R(t)^{k-1}}{\left( k-1\right) !\left( k-1\right) !}%
\left\vert r(t)f^{2}(t)\right\vert dtds &\leq & \\
\frac{1}{\left( k-1\right) !\left( k-1\right) !}\int_{x_{0}}^{x}\frac{%
P(s)^{k-1}}{\left\vert p(s)f^{2}(s)\right\vert }\int_{x_{0}}^{s}R(t)^{k-1}%
\left\vert r(t)f^{2}(t)\right\vert dtds &\leq &
\end{eqnarray*}%
\begin{eqnarray*}
\frac{1}{\left( k-1\right) !k!}\int_{x_{0}}^{x}\frac{P(s)^{k-1}}{\left\vert
p(s)f^{2}(s)\right\vert }R(s)^{k}ds &\leq & \\
\frac{R(x)^{k}}{\left( k-1\right) !k!}\int_{x_{0}}^{x}\frac{P(s)^{k-1}}{%
\left\vert p(s)f^{2}(s)\right\vert }ds &=&\frac{R(x)^{k}P(x)^{k}}{k!k!}
\end{eqnarray*}%
where we have used the equalities $dR(t)=\left\vert r(t)f^{2}(t)\right\vert
dt$ and $dP(t)=\frac{1}{\left\vert p(t)f^{2}(t)\right\vert }dt$ following
from (\ref{defPand R}).

The proof for $\widetilde{X}^{(2n-1)}(x)$ is similar. For $n=1$ the
inequality (\ref{estim _odd_pr}) holds 
\begin{equation*}
\left\vert \widetilde{X}^{(1)}(x)\right\vert \leq \int_{x_{0}}^{x}\left\vert
r(s)f^{2}(s)\right\vert ds=R(x),
\end{equation*}%
Let for $n=k-1$ (\ref{estim _odd_pr}) \ be true 
\begin{equation*}
\left\vert \widetilde{X}^{(2k-3)}(x)\right\vert \leq \frac{P(x)^{k-2}}{%
\left( k-2\right) !}\frac{R(x)^{k-1}}{\left( k-1\right) !}.
\end{equation*}%
Then for $n=k$ we have%
\begin{equation*}
\left\vert \widetilde{X}^{(2k-1)}(x)\right\vert \leq
\int_{x_{0}}^{x}\left\vert r(s)f^{2}(s)\right\vert \int_{x_{0}}^{s}\frac{%
\left\vert \widetilde{X}^{(2k-3)}(t)\right\vert }{\left\vert
p(t)f^{2}(t)\right\vert }dtds\leq
\end{equation*}%
\begin{equation*}
\frac{1}{\left( k-2\right) !\left( k-1\right) !}\int_{x_{0}}^{x}\left\vert
r(s)f^{2}(s)\right\vert \int_{x_{0}}^{s}\frac{P(t)^{k-2}R(t)^{k-1}}{%
\left\vert p(t)f^{2}(t)\right\vert }dtds\leq
\end{equation*}%
\begin{equation*}
\frac{1}{\left( k-1\right) !\left( k-1\right) !}\int_{x_{0}}^{x}\left\vert
r(s)f^{2}(s)\right\vert R(s)^{k-1}P(s)^{k-1}ds\leq \frac{P(x)^{k-1}}{\left(
k-1\right) !}\frac{R(x)^{k}}{k!}
\end{equation*}

The estimates for $X^{(n)}$ can be shown similarly.
\end{proof}

\begin{remark}
\label{Remark Xk}When additionally $1/p$ and $r\in L^{\infty }[a,b]$,
stronger estimates can be obtained for $\widetilde{X}^{(n)}$ and $X^{(n)}$
(as those obtained in \cite{[KP]} for continuous coefficients). For example,%
\begin{equation*}
\left\vert X^{(2n)}(x)\right\vert \leq \left\Vert rf^{2}\right\Vert
_{L^{\infty }[a,b]}^{n}\left\Vert 1/(pf^{2})\right\Vert _{L^{\infty
}[a,b]}^{n}\frac{(b-a)^{2n}}{\left( 2n\right) !}.
\end{equation*}
\end{remark}

The following theorem generalizes the result from \cite{[KP]} onto the case
when $1/p$, $q$ and $r$ are integrable on $[a,b]$.

\begin{theorem}
\label{main_theorem}Let $p$, $q$ and $r$ be complex-valued functions of a
real variable $x\in \lbrack a,b]$ such that $1/p$, $q$ and $r\in L^{1}[a,b]$%
. Let $f$ be a nonvanishing weak solution of the equation%
\begin{equation}
(pf^{\prime })^{\prime }+qf=0  \label{SLhom}
\end{equation}%
on $[a,b]$. Then the general weak solution $u$ of the equation (\ref%
{SLlambda}) on $[a,b]$ has the form 
\begin{equation*}
u=c_{1}u_{1}+c_{2}u_{2},
\end{equation*}%
where $c_{1}$ and $c_{2}$ are arbitrary complex constants,%
\begin{equation}
u_{1}=f\sum_{k=0}^{\infty }\lambda ^{k}\;\widetilde{X}^{(2k)}\text{ \ and }%
u_{2}=f\sum_{k=0}^{\infty }\lambda ^{k}\;X^{(2k+1)}  \label{SPPSsolutions}
\end{equation}%
and both series converge uniformly on $[a,b]$.
\end{theorem}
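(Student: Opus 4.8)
The plan is to reduce the equation to a form that can be integrated twice via the Polya factorization, to recognize the two series as the formal power‑series solutions of that reduced equation, and then to justify the formal manipulations by a careful double application of Proposition~\ref{Prop_conv} together with the estimates of Proposition~\ref{Xestimates}. Concretely, for $u\in D_{L}$ I would set $w=u/f\in AC[a,b]$; by Proposition~\ref{Polya_fact} the equation $L[u]=\lambda ru$ is equivalent to $(pf^{2}w^{\prime})^{\prime}=\lambda rf^{2}w$ a.e.\ on $(a,b)$. Inserting a formal series $w=\sum_{k\ge 0}\lambda^{k}w_{k}$ and matching powers of $\lambda$ gives $(pf^{2}w_{0}^{\prime})^{\prime}=0$ and $(pf^{2}w_{k}^{\prime})^{\prime}=rf^{2}w_{k-1}$ for $k\ge 1$; integrating twice from $x_{0}$ with zero constants of integration and comparing with the definitions (\ref{K1}), (\ref{K2}), one checks that the choice $w_{0}\equiv 1$ forces $w_{k}=\widetilde{X}^{(2k)}$, whereas the choice $w_{0}=X^{(1)}$ forces $w_{k}=X^{(2k+1)}$. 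This identifies $u_{1}=fS_{1}$ and $u_{2}=fS_{2}$ with $S_{1}:=\sum_{k\ge 0}\lambda^{k}\widetilde{X}^{(2k)}$ and $S_{2}:=\sum_{k\ge 0}\lambda^{k}X^{(2k+1)}$; by the bounds (\ref{estim_even}), (\ref{estim_odd}) and the Weierstrass test both series converge uniformly on $[a,b]$, and the real work is to show they define weak solutions.

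The core step is a double term‑by‑term differentiation, each instance handled by Proposition~\ref{Prop_conv}. For $S_{1}$: since $(\widetilde{X}^{(2k)})^{\prime}=\widetilde{X}^{(2k-1)}/(pf^{2})$, the formally differentiated series is $\sum_{k\ge 1}\lambda^{k}\widetilde{X}^{(2k-1)}/(pf^{2})$, which converges in $L^{1}[a,b]$ because $1/(pf^{2})\in L^{1}[a,b]$ (as $1/p\in L^{1}$ and $f$ is continuous and nonvanishing) and $\widetilde{X}^{(2k-1)}$ obeys (\ref{estim_odd}); since its terms vanish at $x_{0}$, Proposition~\ref{Prop_conv} gives $S_{1}\in AC[a,b]$ with $pf^{2}S_{1}^{\prime}=\sum_{k\ge 1}\lambda^{k}\widetilde{X}^{(2k-1)}$ a.e. Differentiating this series formally and using $(\widetilde{X}^{(2k-1)})^{\prime}=rf^{2}\widetilde{X}^{(2k-2)}$, the formal derivative is $\lambda rf^{2}S_{1}$, which converges in $L^{1}[a,b]$ by (\ref{estim_even}) and $rf^{2}\in L^{1}[a,b]$; a second application of Proposition~\ref{Prop_conv} then yields $pf^{2}S_{1}^{\prime}\in AC[a,b]$ and $(pf^{2}S_{1}^{\prime})^{\prime}=\lambda rf^{2}S_{1}$ a.e. The same two steps, now using (\ref{K2}), give $pf^{2}S_{2}^{\prime}\in AC[a,b]$ and $(pf^{2}S_{2}^{\prime})^{\prime}=\lambda rf^{2}S_{2}$ a.e.

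Next I would translate back and verify the span. Since $f\in AC[a,b]$ is nonvanishing, $1/f\in AC[a,b]$, and products of absolutely continuous functions on $[a,b]$ are absolutely continuous, so $u_{j}=fS_{j}\in AC[a,b]$; moreover $pu_{j}^{\prime}=pf^{\prime}S_{j}+\tfrac{1}{f}\,pf^{2}S_{j}^{\prime}$, where $pf^{\prime}\in AC[a,b]$ (since $f$ is a weak solution of (\ref{SLhom})) and $pf^{2}S_{j}^{\prime}\in AC[a,b]$ by the previous step, hence $pu_{j}^{\prime}\in AC[a,b]$ and $u_{j}\in D_{L}$. Proposition~\ref{Polya_fact} then gives $L[u_{j}]=\tfrac{1}{f}(pf^{2}S_{j}^{\prime})^{\prime}=\lambda ru_{j}$ a.e., so $u_{1},u_{2}$, and by linearity every $c_{1}u_{1}+c_{2}u_{2}$, are weak solutions of (\ref{SLlambda}). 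Evaluating at $x_{0}$ gives $S_{1}(x_{0})=1$, $(pf^{2}S_{1}^{\prime})(x_{0})=0$, $S_{2}(x_{0})=0$, $(pf^{2}S_{2}^{\prime})(x_{0})=1$, hence $u_{1}(x_{0})=f(x_{0})$, $(pu_{1}^{\prime})(x_{0})=(pf^{\prime})(x_{0})$, $u_{2}(x_{0})=0$, $(pu_{2}^{\prime})(x_{0})=1/f(x_{0})$; the $2\times 2$ matrix of these initial data has determinant $1$, so $u_{1},u_{2}$ are linearly independent. Since the space of weak solutions of (\ref{SLlambda}) is two‑dimensional by the standard existence–uniqueness theorem for initial value problems under $1/p,q,r\in L^{1}[a,b]$ (cf.\ \cite{Zettl}), $\{u_{1},u_{2}\}$ is a basis and every weak solution equals $c_{1}u_{1}+c_{2}u_{2}$.

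I expect the main obstacle to be exactly the two successive term‑by‑term differentiations. In the continuous‑coefficient case one simply differentiates uniformly convergent series of $C^{1}$ functions, but here each differentiation must be performed inside the $AC$/$L^{1}$ framework, and the scheme works only because the pointwise bounds of Proposition~\ref{Xestimates} combine with the integrability of $1/(pf^{2})$ and $rf^{2}$ to make the differentiated series converge in $L^{1}[a,b]$ — precisely the hypothesis required to invoke Proposition~\ref{Prop_conv}.
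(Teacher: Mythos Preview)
Your proposal is correct and follows essentially the same route as the paper: uniform convergence via the Weierstrass test and Proposition~\ref{Xestimates}, a two-step term-by-term differentiation justified through Proposition~\ref{Prop_conv} in the $AC/L^{1}$ framework, verification that $u_{1},u_{2}\in D_{L}$ solve (\ref{SLlambda}) via the Polya factorization, and linear independence from the generalized Wronskian at $x_{0}$. The only cosmetic differences are your opening motivation via formal power-series matching (absent in the paper) and your explicit appeal to the two-dimensionality of the solution space from \cite{Zettl}, which the paper leaves implicit.
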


\begin{proof}
First let us prove that both series $\sum_{k=0}^{\infty }\lambda ^{k}\;%
\widetilde{X}^{(2k)}$ and $\sum_{k=1}^{\infty }\lambda ^{k}\;\widetilde{X}%
^{(2k-1)}$ converge uniformly to the functions $v\ $and $w\in AC[a,b]$
respectively, and that the series obtained by term-wise differentiation
converge to the functions $v^{\prime }$ and $w^{\prime }$ respectively in
the space $L^{1}[a,b]$. Due to Proposition \ref{Xestimates} we have%
\begin{equation*}
\sum_{k=0}^{\infty }\left\vert \lambda ^{k}\;\widetilde{X}^{(2k)}\right\vert
\leq \sum_{k=0}^{\infty }\frac{C^{k}}{k!k!}<\infty ,
\end{equation*}%
\begin{equation*}
\sum_{k=1}^{\infty }\left\vert \lambda ^{k}\;\widetilde{X}%
^{(2k-1)}\right\vert \leq \left\vert \lambda \right\vert
C_{2}\sum_{k=1}^{\infty }\frac{C^{k-1}}{(k-1)!k!}<\infty ,
\end{equation*}%
where $C=\left\vert \lambda \right\vert C_{1}C_{2}$ , and the constants $%
C_{1}$, $C_{2}$ are defined in Proposition \ref{Xestimates}. Thus, the
uniform convergence of the series defining the functions $v$ and $w$ follows
from the Weierstrass $M$-test. In order to prove that the series obtained
from $\sum_{k=0}^{\infty }\lambda ^{k}\;\widetilde{X}^{(2k)}$ by term-wise
differentiation converges in the space $L^{1}[a,b]$ to $v^{\prime }=\dfrac{w%
}{pf^{2}}$ we consider the $L^{1}$-norm of the difference 
\begin{eqnarray*}
\left\Vert \sum_{k=0}^{N}\lambda ^{k}\left( \widetilde{X}^{(2k)}\right)
^{\prime }-\frac{w}{pf^{2}}\right\Vert _{L^{1}} &=&\int_{a}^{b}\left\vert
\sum_{k=1}^{N}\lambda ^{k}\;\frac{\widetilde{X}^{(2k-1)}}{pf^{2}}-\frac{w}{%
pf^{2}}\right\vert dx \\
&\leq &\sup_{[a,b]}\left\vert \sum_{k=1}^{N}\lambda ^{k}\;\widetilde{X}%
^{(2k-1)}-w\right\vert C_{1}\rightarrow 0
\end{eqnarray*}%
as $N\rightarrow \infty .$ From Proposition \ref{Prop_conv} we conclude that 
$v\in AC[a,b]$, the series defining $v$ can be term-wise differentiated and $%
v^{\prime }=\frac{w}{pf^{2}}$ a.e. The corresponding result for the function 
$w$ is proved similarly.

Next we prove that $u_{1}$ is a solution of (\ref{SLlambda}). The functions $%
u_{1}$ and%
\begin{equation*}
pu_{1}^{\prime }=pf^{\prime }\sum_{k=0}^{\infty }\lambda ^{k}\;\widetilde{X}%
^{(2k)}+\frac{1}{f}\sum_{k=1}^{\infty }\lambda ^{k}\;\widetilde{X}^{(2k-1)}
\end{equation*}%
are absolutely continuous on $[a,b]$. Moreover, as $f$ is a nonvanishing
solution of (\ref{SLhom}) the operator $L$ admits the Polya factorization (%
\ref{Polyafact}) and hence we have%
\begin{equation*}
L\left[ u_{1}\right] =\frac{1}{f}\frac{d}{dx}pf^{2}\frac{d}{dx}%
\sum_{k=0}^{\infty }\lambda ^{k}\;\widetilde{X}^{(2k)}=\frac{1}{f}\frac{d}{dx%
}\sum_{k=1}^{\infty }\lambda ^{k}\;\widetilde{X}^{(2k-1)}=rf\sum_{k=1}^{%
\infty }\lambda ^{k}\;\widetilde{X}^{(2k-2)}=\lambda ru_{1}.
\end{equation*}%
In a similar way we can check that $u_{2}$ satisfies (\ref{SLlambda}) as
well.

The last step is to verify that the generalized Wronskian of $u_{1}$ and $%
u_{2}$ 
\begin{equation*}
pW(u_{1},u_{2})=p(x)(u_{1}(x)u_{2}^{\prime }(x)-u_{2}(x)u_{1}^{\prime }(x))
\end{equation*}
is different from zero at some point (this is equivalent to show that the
solutions $u_{1}$ and $u_{2}$ are linearly independent \cite{Zettl}). As $%
\widetilde{X}^{(n)}(x_{0})=X^{(n)}(x_{0})=0$ for any $n\geq 1$ it is easy to
see that%
\begin{equation*}
\begin{array}{ll}
u_{1}(x_{0})=f(x_{0}),\medskip & p(x)u_{1}^{\prime
}(x)|_{x=x_{0}}=p(x)f^{\prime }(x)|_{x=x_{0}} \\ 
u_{2}(x_{0})=0, & p(x)u_{2}^{\prime }(x)|_{x=x_{0}}=\frac{1}{f(x_{0})}.%
\end{array}%
\end{equation*}%
Thus $pW(u_{1},u_{2})=1.$
\end{proof}

\bigskip

\begin{remark}
\label{Shift lambda}The SPPS representation for solutions (\ref%
{SPPSsolutions}) established in Theorem \ref{main_theorem} is based on a
particular solution $f$ corresponding to $\lambda =0$. Following\emph{\ \cite%
{[KP]}} one can observe that this solution $f$ can be found as 
\begin{equation*}
f=c_{1}u_{1}+c_{2}u_{2},
\end{equation*}%
where $c_{1}$ and $c_{2}$ are arbitrary constants, $u_{1}$ and $u_{2}$ are
defined by (\ref{SPPSsolutions})\ with $f\equiv \lambda =1$ and by using $-q$
in place of $r$ in the definition (\ref{K1}) and (\ref{K2}) of $\widetilde{X}%
^{(k)}$ and $X^{(k)}$. In the regular case the choice $c_{1}=1$ and $c_{2}=i$
provides a nonvanishing solution $f$.

The procedure for construction of solutions given by Theorem \ref%
{main_theorem} is still valid when a solution of the equation $(pf^{\ast
\prime })^{\prime }+qf^{\ast }=\lambda ^{\ast }rf^{\ast }$ is available for
some $\lambda =\lambda ^{\ast }$\emph{. }In this case\emph{\ }the solutions (%
\ref{SPPSsolutions}) take the form 
\begin{equation}
u_{1}=f^{\ast }\sum_{k=0}^{\infty }\left( \lambda -\lambda ^{\ast }\right)
^{k}\;\widetilde{X}^{(2k)}\text{ \ and }u_{2}=f^{\ast }\sum_{k=0}^{\infty
}\left( \lambda -\lambda ^{\ast }\right) ^{k}\;X^{(2k+1)},
\label{solutions shift}
\end{equation}%
where $\widetilde{X}^{(k)}$ and $X^{(k)}$ are given by (\ref{K1}), (\ref{K2}%
) with $f=f^{\ast }$. Indeed, equation (\ref{SLlambda}) can be written as 
\begin{equation*}
(pu^{\prime })^{\prime }+\left( q-\lambda ^{\ast }r\right) u=\left( \lambda
-\lambda ^{\ast }\right) ru,
\end{equation*}%
then the same arguments that were used to prove Proposition \ref{Xestimates}
and Theorem \ref{main_theorem} can be applied. The procedure for
constructing solutions of equation (\ref{SLlambda}) based on a particular
solution $f^{\ast }$ corresponding to $\lambda =\lambda ^{\ast }$ is known
as the\emph{\ }spectral shift \ and is of great practical importance
especially in numerical applications\emph{\ \cite{[KP]}, \cite{KKB2012}, 
\cite{KrTV}.}
\end{remark}

\section{Numerical implementation and examples}

The SPPS method based on the representations established in Theorem \ref%
{main_theorem} is especially convenient for solving spectral problems. It
has been used previously in a number of works and proved to be efficient in
various applications both with continuous \cite{CKKO}, \cite{KKB2012}, \cite%
{BarBlKr}, \cite{BarKrR}, \cite{CastKrOR}, \cite{CKT}, \cite{KhKrR}, \cite%
{KrTV}, \cite{KrV} and some special cases of singular coefficients \cite%
{CastKrT}. In this section we present several numerical examples
illustrating the application of the method to the spectral problems with
discontinuous coefficients. In practical terms the SPPS method can be
formulated in the following steps

\begin{itemize}
\item obtain an analytic expression for the characteristic function of the
problem which will be denoted by $\Phi (\lambda )$ in terms of the SPPS
representations (\ref{SPPSsolutions});

\item calculate the first $2N+1$ formal powers $\widetilde{X}^{(n)}$ and $%
X^{(n)}$ necessary to approximate the characteristic function $\Phi (\lambda
)$ by a partial sum $\Phi _{N}(\lambda )$;

\item find roots of the equation $\Phi _{N}(\lambda )=0$.
\end{itemize}

To perform the second step it is necessary to find a nonvanishing particular
solution $f$ of equation (\ref{SLhom}) wich can be obtained by the SPPS
method (see Remark \ref{Shift lambda}). The SPPS representation given by
Theorem \ref{main_theorem} is based on this particular solution $f$ \ which
corresponds to $\lambda =0$ and the power series of $\Phi (\lambda )$ is
centered in $\lambda =0$. To improve the accuracy of the eigenvalues located
farther from the center of the series we perform the spectral shift
described in Remark \ref{Shift lambda}. On every step after calculating an
eigenvalue $\lambda _{n}$ this value is chosen as a new $\lambda ^{\ast }$
and the corresponding particular solution $f^{\ast }$ is computed according
to (\ref{solutions shift}). In the case when the boundary conditions are
spectral parameter dependent, the characteristic function can again be
written in a form containing power series in terms of $(\lambda -\lambda
^{\ast })$, we illustrate it in some examples below.

The range of applicability of the SPPS method includes complex coefficients
and complex eigenvalues. Therefore it is important to note that step 3 can
be complemented with a preliminary counting of zeros in a given domain of
the complex plane of the variable $\lambda $. This can be done using a
classical result from complex analysis - the principle of the argument. More
on applications of the argument principle (as well as of Rouche's theorem)
can be found in \cite{CastKrT}, \cite{KrTV}.

All calculations were performed with the aid of Matlab 2009 in the double
precision machine arithmetic. The formal powers $\widetilde{X}^{(n)}$ and $%
X^{(n)}$ were computed using the Newton-Cottes 6 point integration formula
of 7th order, modified to implement indefinite integration. The integration
was performed separately on each subinterval where the coefficients of the
considered equation were continuous, with subsequent joining together of
separate integrals into a continuous function. To find zeros of the
polynomial $\Phi _{N}(\lambda )$ the routine \texttt{roots }of \textsc{Matlab%
} was used.

In all the considered examples we use the spectral shift technique (see
Remark \ref{Shift lambda}) for calculating every subsequent eigenvalue $%
\lambda _{n}$ with $\lambda _{n}^{\ast }=\lambda _{n-1}+\Delta \lambda $,
where $\Delta \lambda $ is a displacement and $\lambda _{0}^{\ast }=0$.

In the presented numerical results we specify two parameters: $N$ is the
degree of the polynomial $\Phi _{N}$, i.e. the number of calculated formal
powers is $2N+1$, and $M$ is the number of points chosen on the considered
segment for the calculation of integrals. To obtain the exact eigenvalues we
use the routine \texttt{FindRoot} of \textsc{Mathematica }applied to the
exact characteristic function $\Phi (\lambda )$\textsc{.}

\begin{example}
\label{example1}\bigskip Our first example is taken from \cite{Tharwat}
where the numerical results are obtained by means of the sinc method.
Consider the equation%
\begin{equation}
-u^{\prime \prime }+qu=\lambda u,\quad x\in \lbrack -1,1],  \label{equation1}
\end{equation}%
where%
\begin{equation*}
q(x)=\left\{ 
\begin{array}{ll}
-1, & x\in \lbrack -1,0], \\ 
-2, & x\in (0,1]%
\end{array}%
\right. 
\end{equation*}%
with the boundary conditions 
\begin{eqnarray}
\lambda u(-1)+u^{\prime }(-1) &=&0,  \label{boundary1} \\
\lambda u(1)-u^{\prime }(1) &=&0,  \notag
\end{eqnarray}%
and the transmission conditions%
\begin{equation*}
u(0_{-})=u(0_{+}),\quad u^{\prime }(0_{-})=u^{\prime }(0_{+}).
\end{equation*}%
The subscripts "$+$" and "$-$" denote the limiting values of $u(x)$ as $x$
approaches $0$ from the right and left, respectively. A general solution of
equation (\ref{equation1}) ($\lambda \neq -2$) which satisfies the
transmission conditions\ is%
\begin{equation*}
u(x)=\left\{ 
\begin{array}{cc}
A\cos \sqrt{1+\lambda }x+B\frac{\sqrt{2+\lambda }}{\sqrt{1+\lambda }}\sin 
\sqrt{1+\lambda }x, & x\in \lbrack -1,0], \\ 
A\cos \sqrt{2+\lambda }x+B\sin \sqrt{2+\lambda }x, & x\in (0,1].%
\end{array}%
\right. 
\end{equation*}%
Substituting it to the boundary conditions (\ref{boundary1}) it is easy to
arrive at the following characteristic equation%
\begin{gather*}
\Phi (\lambda )=\frac{1}{\sqrt{2+\lambda }}\left\{ \cos \sqrt{1+\lambda }%
\left[ \left( \lambda ^{2}-\lambda -2\right) \sin \sqrt{2+\lambda }-2\lambda 
\sqrt{2+\lambda }\cos \sqrt{2+\lambda }\right] \right. + \\
\left. \frac{1}{\sqrt{1+\lambda }}\sin \sqrt{1+\lambda }\left[ \lambda
\left( 2\lambda +3\right) \sin \sqrt{2+\lambda }+\sqrt{2+\lambda }\left(
\lambda ^{2}-\lambda -1\right) \cos \sqrt{2+\lambda }\right] \right\} =0
\end{gather*}%
In terms of the SPPS solutions (\ref{SPPSsolutions}) the eigenfunctions of
the spectral problem take the form%
\begin{equation*}
u(x,\lambda )=u_{1}(x,\lambda )+f(-1)(\lambda f(-1)+f^{\prime
}(-1))u_{2}(x,\lambda ),
\end{equation*}%
where $f(x)=\left\{ 
\begin{array}{cc}
\cos x & x\in \lbrack -1,0] \\ 
\cos \sqrt{2}x & x\in (0,1]%
\end{array}%
\right. $ and $\lambda $ satisfies the SPPS characteristic equation 
\begin{equation*}
\Phi (\lambda )=(\lambda f(-1)+f^{\prime }(-1))\left( \lambda
u_{2}(1,\lambda )-u_{2}^{\prime }(1,\lambda )\right) +\frac{1}{f(-1)}\left(
\lambda u_{1}(1,\lambda )-u_{1}^{\prime }(1,\lambda )\right) =0.
\end{equation*}%
We denote for short $f_{\pm 1}:=f(\pm 1)$, $\widetilde{X}^{(n)}(1):=%
\widetilde{X}_{1}^{(n)}$, $X^{(n)}(1):=X_{1}^{(n)}$\ and taking into account
that $p(x)=-1$ write down $\Phi (\lambda )$ in explicit form 
\begin{gather*}
\Phi (\lambda )=\sum_{k=0}^{\infty }\lambda ^{k}C_{k}\text{, where} \\
C_{k}=f_{-1}\left( f_{1}X_{1}^{(2k-3)}+\frac{X_{1}^{(2k-2)}}{f_{1}}%
-f_{1}^{\prime }X_{1}^{(2k-1)}\right) +f_{-1}^{\prime }\left(
f_{1}X_{1}^{(2k-1)}+\frac{X_{1}^{(2k)}}{f_{1}}-f_{1}^{\prime
}X_{1}^{(2k+1)}\right) + \\
\frac{1}{f_{-1}}\left( f_{1}\widetilde{X}_{1}^{(2k-2)}+\frac{\widetilde{X}%
_{1}^{(2k-1)}}{f_{1}}-f_{1}^{\prime }\widetilde{X}_{1}^{(2k)}\right) ,\text{
note that }X_{1}^{(\alpha )}\text{, }\widetilde{X}_{1}^{(\alpha )}\text{
equal cero for }\alpha <0.
\end{gather*}

When the shift by $\lambda ^{\ast }$ is performed the characteristic
function can be written as power series in terms of $(\lambda -\lambda
^{\ast })$, we denote it by $\Phi ^{\ast }(\lambda -\lambda ^{\ast })$ 
\begin{gather*}
\Phi ^{\ast }(\lambda -\lambda ^{\ast })=\Phi (\lambda -\lambda ^{\ast
})+\lambda ^{\ast }\sum_{k=0}^{\infty }(\lambda -\lambda ^{\ast })^{k}B_{k}%
\text{, where } \\
B_{k}=f_{-1}^{\ast }\left( 2f_{1}^{\ast }X_{1}^{(2k-1)}-f_{1}^{\ast \prime
}X_{1}^{(2k+1)}+\frac{X_{1}^{(2k)}}{f_{1}^{\ast }}\right) +\left( \lambda
^{\ast }f_{-1}^{\ast }+f_{-1}^{\ast \prime }\right) f_{1}^{\ast
}X_{1}^{(2k+1)}+\frac{f_{1}^{\ast }}{f_{-1}^{\ast }}\widetilde{X}_{1}^{(2k)}
\end{gather*}
and $X_{1}^{(-1)}=0$. The formal powers $\widetilde{X}^{(k)}$and $X^{(k)}$
are calculated again by (\ref{K1}) and (\ref{K2}) but now with $f=f^{\ast }$%
.\medskip

In Table 1 we present the exact eigenvalues obtained by \texttt{FindRoot}%
\textsc{,} an absolute error of the numerically computed eigenvalues by
means of the SPPS method and the result from \cite{Tharwat} where the sinc
method is used to calculate the square roots of the eigenvalues.\medskip

\begin{tabular}{llll}
\multicolumn{4}{l}{\textbf{Table 1.} The exact eigenvalues from example \ref%
{example1}} \\ 
\multicolumn{4}{l}{and their absolute error obtained with $N=60$, $M=50000$}
\\ \hline\hline
$n$ & Exact $\lambda _{n}$ & SPPS abs. error & sinc abs. error \cite{Tharwat}
\\ \hline
$0$ & $-0.8838501773806790$ & $7.4027\times 10^{-15}$ &  \\ \hline
$1$ & $0.33593977069858758$ & $5.7176\times 10^{-15}$ & $7.\,337\,3\times
10^{-13}$ \\ \hline
$2$ & $3.18616750501251774$ & $1.1941\times 10^{-13}$ & $4.306\,1\times
10^{-12}$ \\ \hline
$3$ & $10.4888366560518901$ & $1.5446\times 10^{-13}$ & $7.819\,4\times
10^{-12}$ \\ \hline
$4$ & $22.7582649977549487$ & $5.6014\times 10^{-13}$ & $6.103\,9\times
10^{-11}$ \\ \hline
$5$ & $40.0145357092335736$ & $4.8143\times 10^{-13}$ & $1.296\,3\times
10^{-10}$ \\ \hline
$6$ & $62.2048900366600122$ & $1.4728\times 10^{-12}$ &  \\ \hline
$7$ & $89.3430782636483577$ & $1.3721\times 10^{-12}$ &  \\ \hline
$8$ & $121.412971555997595$ & $1.6146\times 10^{-12}$ &  \\ \hline
$9$ & $158.423147639717927$ & $2.1134\times 10^{-12}$ &  \\ \hline
$10$ & $200.365776764230126$ & $1.8909\times 10^{-12}$ &  \\ \hline
$20$ & $891.233220344783089$ & $6.1716\times 10^{-12}$ &  \\ \hline
$30$ & $2075.58493709348608$ & $8.6609\times 10^{-12}$ &  \\ \hline
$50$ & $5924.73025618735463$ & $1.5332\times 10^{-11}$ &  \\ \hline
$75$ & $13511.9885376287504$ & $1.6978\times 10^{-11}$ &  \\ \hline
$100$ & $24183.4982363150305$ & $3.3070\times 10^{-10}$ &  \\ \hline
\end{tabular}%
\medskip

Notice that in \cite{Tharwat} the negative eigenvalue was not detected .
\end{example}

\begin{example}
\label{example2}As a second example we consider the eigenvalue problem that
arises \cite{KhS} in the study of heat conduction in layered composites $\ $%
\begin{equation*}
-(pu^{\prime })^{\prime }=\lambda ru,\quad x\in \lbrack a_{1},a_{4}]
\end{equation*}%
where 
\begin{equation*}
p(x)=\left\{ 
\begin{tabular}{ll}
$p_{1},$ & $x\in \left[ a_{1},a_{2}\right) $ \\ 
$p_{2},$ & $x\in \left[ a_{2},a_{3}\right) $ \\ 
$p_{3}$ & $x\in \left[ a_{3},a_{4}\right] $%
\end{tabular}%
\right. ,\quad r(x)=\left\{ 
\begin{tabular}{ll}
$r_{1},$ & $x\in \left[ a_{1},a_{2}\right) $ \\ 
$r_{2},$ & $x\in \left[ a_{2},a_{3}\right) $ \\ 
$r_{3}$ & $x\in \left[ a_{3},a_{4}\right] $%
\end{tabular}%
\right.
\end{equation*}%
with $p_{i},$ $r_{i}$ being nonzero complex-valued constants, the boundary
and the transmission conditions are%
\begin{equation*}
u(a_{1})=u(a_{4})=0,
\end{equation*}%
\begin{eqnarray*}
u(a_{2_{-}}) &=&u(a_{2_{+}}),\quad p_{1}u^{\prime
}(a_{2_{-}})=p_{2}u^{\prime }(a_{2_{+}}), \\
u(a_{3_{-}}) &=&u(a_{3_{+}}),\quad p_{2}u^{\prime
}(a_{3_{-}})=p_{3}u^{\prime }(a_{3_{+}}).
\end{eqnarray*}%
It can be verified that the eigenfunction of this Sturm-Liouville problem
for $\lambda \neq 0$ can be taken in the form%
\begin{equation*}
u(x)=\left\{ 
\begin{tabular}{ll}
$\sin \sqrt{\frac{\lambda r_{1}}{p_{1}}}\left( x-a_{1}\right) ,$ & $x\in %
\left[ a_{1},a_{2}\right] $ \\ 
$v(x),$ & $x\in \left[ a_{2},a_{3}\right] $ \\ 
$\frac{v(a_{3})\sin \sqrt{\frac{\lambda r_{3}}{p_{3}}}\left( x-a4\right) }{%
\sin \sqrt{\frac{\lambda r_{3}}{p_{3}}}\left( a_{3}-a_{4}\right) },$ & $x\in %
\left[ a_{3},a_{4}\right] $%
\end{tabular}%
\right. ,
\end{equation*}%
where $v(x)=\sin \sqrt{\frac{\lambda r_{1}}{p_{1}}}\left( a_{2}-a_{1}\right)
\cos \sqrt{\frac{\lambda r_{2}}{p_{2}}}\left( x-a_{2}\right) +\sqrt{\frac{%
p_{1}r_{1}}{p_{2}r_{2}}}\cos \sqrt{\frac{\lambda r_{1}}{p_{1}}}\left(
a_{2}-a_{1}\right) \sin \sqrt{\frac{\lambda r_{2}}{p_{2}}}\left(
x-a_{2}\right) $, for values of $\lambda $ satisfying the following
characteristic equation which is a result of the transmission condition $%
p_{2}u^{\prime }(a_{3_{-}})=p_{3}u^{\prime }(a_{3_{+}})$ 
\begin{equation*}
\Phi (\lambda )=p_{2}v^{\prime }(a_{3})-p_{3}\frac{v(a_{3})\sqrt{\frac{%
\lambda r_{3}}{p_{3}}}\cos \sqrt{\frac{\lambda r_{3}}{p_{3}}}(a_{4}-a_{3})}{%
\sin \sqrt{\frac{\lambda r_{3}}{p_{3}}}(a_{3}-a_{4})}=0
\end{equation*}%
In terms of the SPPS solutions (\ref{SPPSsolutions}) the eigenfunctions of
this spectral problem are 
\begin{equation*}
u(x,\lambda )=u_{2}(x,\lambda )=\sum_{k=0}^{\infty }\lambda
^{k}\;X^{(2k+1)}(x),
\end{equation*}%
where $\lambda $ satisfies the characteristic equation%
\begin{equation*}
\Phi (\lambda )=u_{2}(a_{4},\lambda )=0.
\end{equation*}%
\ \ In this example, we calculate the eigenvalues for the cases of both real
and complex coefficients. The results for the case of real coefficients is
presented in Table 2 which contain the exact eigenvalues obtained by \texttt{%
FindRoot}\textsc{\ }and an absolute error of the numerically computed
eigenvalues by means of the SPPS method. For the calculations the following
values of parameters were used $a_{1}=-4;\ a_{2}=-2;\ a_{3}=0;\ a_{4}=2;\
p_{1}=11;\ p_{2}=0.5;\ p_{3}=22;\ r_{1}=3;\ r_{2}=7;\ r_{3}=1\medskip $

\begin{tabular}{lll}
\multicolumn{3}{l}{\textbf{Table 2.} The exact eigenvalues from example \ref%
{example2}} \\ 
\multicolumn{3}{l}{and their absolute error obtained with $N=90$, $M=149998$}
\\ \hline\hline
$n$ & Exact $\lambda _{n}$ & SPPS abs. error \\ \hline
$0$ & $0.1537166881459068$ & $8.2712\times 10^{-15}$ \\ \hline
$1$ & $0.6040510821002027$ & $3.2350\times 10^{-14}$ \\ \hline
$2$ & $1.3001446415922297$ & $5.3654\times 10^{-14}$ \\ \hline
$3$ & $2.1131346987303714$ & $8.5513\times 10^{-14}$ \\ \hline
$4$ & $3.0657222557870770$ & $4.3681\times 10^{-14}$ \\ \hline
$5$ & $4.3891432656424323$ & $1.0272\times 10^{-13}$ \\ \hline
$6$ & $6.0755689904595746$ & $9.1154\times 10^{-14}$ \\ \hline
$7$ & $8.0532227313898138$ & $3.1959\times 10^{-13}$ \\ \hline
$8$ & $10.263818816222202$ & $1.6454\times 10^{-13}$ \\ \hline
$9$ & $12.662474776451201$ & $2.3047\times 10^{-13}$ \\ \hline
$10$ & $15.226226392993377$ & $4.5646\times 10^{-13}$ \\ \hline
$20$ & $54.709035491349110$ & $3.5327\times 10^{-10}$ \\ \hline
$30$ & $117.46422540410899$ & $7.7278\times 10^{-10}$ \\ \hline
$50$ & $320.94806622153616$ & $2.3762\times 10^{-7}$ \\ \hline
$75$ & $706.76377022340156$ & $2.8990\times 10^{-7}$ \\ \hline
$100$ & $1249.2350537091337$ & $1.7486\times 10^{-4}$ \\ \hline
\end{tabular}%
$\medskip $

For the second case of complex coefficients we take the same interval as in
the previous case and the following complex values of parameters $\
p_{1}=11+1i;\ p_{2}=0.5+2i;\ p_{3}=22+1i;\ r_{1}=3+2i;\ r_{2}=7+1i;\
r_{3}=1-2i$. The numerical result presented in Table 3 is obtained using the
spectral shift with $\lambda _{n}^{\ast }=\lambda _{n-1}+0.5$ if $\func{Im}%
(\lambda _{n-1})>0$. If $\func{Im}(\lambda _{n-1})<0$, we do not make a
shift, i.e., $\lambda _{n}^{\ast }=\lambda _{n-2}+0.5$. The computation time
for each obtained eigenvalue was around 20 sec.

\bigskip 
\begin{tabular}{lll}
\multicolumn{3}{l}{\textbf{Table 3.} The exact eigenvalues from example \ref%
{example2}} \\ 
\multicolumn{3}{l}{and their absolute error obtained with $N=90$, $M=120001$}
\\ \hline\hline
$n$ & Exact $\lambda _{n}$ & SPPS abs. error \\ \hline
$0$ & $0.469982057297078+0.337010475999479i$ & $9.7688\times 10^{-14}$ \\ 
\hline
$1$ & $1.453180135224583+0.455435050626238i$ & $1.6122\times 10^{-13}$ \\ 
\hline
$2$ & $1.931066258100073+1.957548941283227i$ & $1.4746\times 10^{-13}$ \\ 
\hline
$3$ & $2.769261458131468+4.456326784352162i$ & $4.6264\times 10^{-14}$ \\ 
\hline
$4$ & $3.315488435122103+8.156636278096363i$ & $2.0441\times 10^{-13}$ \\ 
\hline
$5$ & $4.745130735885916+11.83858259923195i$ & $2.2416\times 10^{-13}$ \\ 
\hline
$6$ & $14.63113794579346-3.537417383243752i$ & $1.4060\times 10^{-12}$ \\ 
\hline
$7$ & $7.897123133993671+17.11862172171793i$ & $1.8218\times 10^{-13}$ \\ 
\hline
$8$ & $11.68836415028633+23.94806911602541i$ & $4.1506\times 10^{-13}$ \\ 
\hline
$9$ & $16.17740169627085+31.68384253976561i$ & $1.0887\times 10^{-12}$ \\ 
\hline
$10$ & $41.18479452885688-14.80403116625259i$ & $1.6835\times 10^{-11}$ \\ 
\hline
$14$ & $28.58271526370320+69.03918582895608i$ & $9.0925\times 10^{-12}$ \\ 
\hline
$15$ & $81.02445701222293-33.28304821488122i$ & $2.3535\times 10^{-9}$ \\ 
\hline
$16$ & $35.59337686888703+81.52159363573743i$ & $2.0417\times 10^{-12}$ \\ 
\hline
$21$ & $61.04946135834132+140.2136938161330i$ & $5.1677\times 10^{-8}$ \\ 
\hline
$22$ & $64.21517057997782+157.3867636446106i$ & $1.0209\times 10^{-6}$ \\ 
\hline
$25$ & $200.5617997525851-91.87829196068057i$ & $3.1432\times 10^{-3}$ \\ 
\hline
$26$ & $94.66606874013544+215.4400279751237i$ & $1.3002\times 10^{-6}$ \\ 
\hline
\end{tabular}

Let us note that the SPPS approach allows one to visualize the
characteristic function of the problem. In Fig. 1 we plot the SPPS
approximation of the function $-\ln \left\vert \Phi _{90}(\lambda
)\right\vert $ on the disk $\left\vert \lambda \right\vert \leqslant 45$ in
the complex plane of the variable $\lambda $. The peaks on the graph
correspond to the first eleven approximate eigenvalues of the problem. 
\FRAME{ftbpFU}{6.0096in}{2.77in}{0pt}{\Qcb{The graph of $-\log \left\vert
\Phi _{90}(\protect\lambda )\right\vert $ in the disk $\left\vert \protect%
\lambda \right\vert \leqslant 45$. The peaks represent the first eleven
approximate eigenvalues presented in Table 3.}}{}{dicont_complex.eps}{%
\special{language "Scientific Word";type "GRAPHIC";display
"USEDEF";valid_file "F";width 6.0096in;height 2.77in;depth
0pt;original-width 9.3452in;original-height 6.5337in;cropleft "0";croptop
"1";cropright "1";cropbottom "0";filename
'graficas/Dicont_complex.eps';file-properties "XNPEU";}}
\end{example}

\begin{example}
\label{example3}Consider the eigenvalue problem%
\begin{eqnarray}
-u^{\prime \prime }+qu &=&\lambda u,\quad x\in \lbrack -1,1],
\label{eq_example3} \\
u(-1)+\lambda u^{\prime }(-1) &=&0,\quad u(1)+\lambda u^{\prime }(1)=0, 
\notag
\end{eqnarray}%
with 
\begin{equation*}
q(x)=\left\{ 
\begin{array}{ll}
-2, & x\in \lbrack -1,0], \\ 
\ \ x, & x\in (0,1].%
\end{array}%
\right.
\end{equation*}%
This example is from \cite{Tharwat} where unfortunately the results are
given with a misprint. Due to this we are not able to compare our numerical
results to those presented in \cite{Tharwat}.

It is easy to see that the general solution of the equation (\ref%
{eq_example3}) is%
\begin{equation*}
u(x,\lambda )=\left\{ 
\begin{array}{cc}
\begin{array}{c}
\left( C_{1}Ai(-\lambda )+C_{2}Bi(-\lambda )\right) \cos \sqrt{2+\lambda }x+
\\ 
\frac{1}{\sqrt{2+\lambda }}\left( C_{1}Ai^{\prime }(-\lambda
)+C_{2}Bi^{\prime }(-\lambda )\right) \sin \sqrt{2+\lambda }x,%
\end{array}
& x\in \lbrack -1,0], \\ 
C_{1}Ai(x-\lambda )+C_{2}Bi(x-\lambda ), & x\in \lbrack 0,1],%
\end{array}%
\right.
\end{equation*}%
where $Ai$ and $Bi$ are the Airy functions. The exact characteristic
equation for this problem takes the form%
\begin{gather*}
\Phi (\lambda )=\left( \left( Ai(-\lambda )+\lambda Ai^{\prime }(-\lambda
)\right) \cos \sqrt{2+\lambda }+\left( \lambda \sqrt{2+\lambda }Ai(-\lambda
)-\frac{Ai^{\prime }(-\lambda )}{\sqrt{2+\lambda }}\right) \sin \sqrt{%
2+\lambda }\right) \\
\left( Bi(1-\lambda )+\lambda Bi^{\prime }(1-\lambda )\right) - \\
\left( \left( Bi(-\lambda )+\lambda Bi^{\prime }(-\lambda )\right) \cos 
\sqrt{2+\lambda }+\left( \lambda \sqrt{2+\lambda }Bi(-\lambda )-\frac{%
Bi^{\prime }(-\lambda )}{\sqrt{2+\lambda }}\right) \sin \sqrt{2+\lambda }%
\right) \\
\left( Ai(1-\lambda )+\lambda Ai^{\prime }(1-\lambda )\right) =0
\end{gather*}%
The characteristic equation in terms of the SPPS method is%
\begin{equation*}
\Phi (\lambda )=\left( f(-1)+\lambda f^{\prime }(-1)\right) \left(
u_{2}(1)+\lambda u_{2}^{\prime }(1)\right) -\frac{\lambda }{p(-1)f(-1)}%
\left( u_{1}(1)+\lambda u_{1}^{\prime }(1)\right) =0,
\end{equation*}%
where $f$ is a nonvanishing particular solution of equation (\ref%
{eq_example3}) for $\lambda =0$ 
\begin{equation*}
f(x)=\left\{ 
\begin{array}{cc}
\cos \sqrt{2}x, & x\in \lbrack -1,0], \\ 
\frac{Bi^{\prime }(0)}{W(0)}Ai(x)-\frac{Ai^{\prime }(0)}{W(0)}Bi(x), & x\in
\lbrack 0,1],%
\end{array}%
\right.
\end{equation*}%
$W(0)$ denotes the Wronskian of the functions $Ai$ and $Bi$ evaluated in
zero.

Thus, $\Phi (\lambda )$ can be written as power series in terms of $\lambda $
\begin{gather*}
\Phi (\lambda )=\sum_{k=0}^{\infty }\lambda ^{k}C_{k}\text{, where} \\
C_{k}=f_{-1}\left( f_{1}\;X_{1}^{(2k+1)}+f_{1}^{\prime }\;X_{1}^{(2k-1)}+%
\frac{X_{1}^{(2k-2)}}{pf_{1}}\;\right) + \\
f_{-1}^{\prime }\left( f_{1}\;X_{1}^{(2k-1)}+f_{1}^{\prime }\;X_{1}^{(2k-3)}+%
\frac{X_{1}^{(2k-4)}}{pf_{1}}\right) +\frac{1}{f_{-1}}\left( f_{1}\widetilde{%
X}_{1}^{(2k-2)}+f_{1}^{\prime }\;\widetilde{X}_{1}^{(2k-4)}+\frac{\widetilde{%
X}_{1}^{(2k-5)}}{pf_{1}}\;\right)
\end{gather*}%
To perform the shift by $\lambda ^{\ast }$ we write the characteristic
function in terms of $(\lambda -\lambda ^{\ast })$ and denote it as $\Phi
^{\ast }(\lambda -\lambda ^{\ast })$%
\begin{equation*}
\Phi ^{\ast }(\lambda -\lambda ^{\ast })=\Phi (\lambda -\lambda ^{\ast
})+\lambda ^{\ast }\sum_{k=0}^{\infty }(\lambda -\lambda ^{\ast })^{k}B_{k}%
\text{,}
\end{equation*}%
where $B_{k}=\left( f_{-1}^{\ast }+\lambda ^{\ast }f_{-1}^{\ast \prime
}\right) \left( f_{1}^{\ast \prime }X_{1}^{(2k+1)}-\frac{X_{1}^{(2k)}}{%
f_{1}^{\ast }}\right) +f_{-1}^{\ast \prime }\left( f_{1}^{\ast
}X_{1}^{(2k+1)}+2f_{1}^{\ast \prime }X_{1}^{(2k-1)}-\frac{2X_{1}^{(2k-2)}}{%
f_{1}^{\ast }}\right) +$

$\frac{1}{f_{-1}^{\ast }}\left( f_{1}^{\ast }\widetilde{X}%
_{1}^{(2k)}+2f_{1}^{\ast \prime }\widetilde{X}_{1}^{(2k-2)}-\frac{2%
\widetilde{X}_{1}^{(2k-3)}}{f_{1}^{\ast }}+\lambda ^{\ast }\left(
f_{1}^{\ast \prime }\widetilde{X}_{1}^{(2k)}-\frac{\widetilde{X}_{1}^{(2k-1)}%
}{f_{1}^{\ast }}\right) \right) $, $k=0,1,2,...\medskip $and again $%
\widetilde{X}_{1}^{(\alpha )}$ and $X_{1}^{(\alpha )}$ equal zero for $%
\alpha <0$ and the formal powers $\widetilde{X}^{(k)}$and $X^{(k)}$ are
calculated by (\ref{K1}) and (\ref{K2}) where $f=f^{\ast }$.\medskip

\bigskip 
\begin{tabular}{lll}
\multicolumn{3}{l}{\textbf{Table 4.} The exact eigenvalues from example \ref%
{example3}} \\ 
\multicolumn{3}{l}{and their absolute error obtained with $N=95$, $M=55000$}
\\ \hline\hline
$n$ & Exact $\lambda _{n}$ & SPPS abs. error \\ \hline
$0$ & $-1.00143294415521698407$ & $5.5511\times 10^{-15}$ \\ \hline
$1$ & $2.4057972392439196797808$ & $3.3793\times 10^{-13}$ \\ \hline
$2$ & $9.1124600099908036194275$ & $8.6482\times 10^{-12}$ \\ \hline
$3$ & $21.519631798576724032730$ & $1.5888\times 10^{-11}$ \\ \hline
$4$ & $38.723530941627280094388$ & $8.7765\times 10^{-11}$ \\ \hline
$5$ & $60.956434348891755464346$ & $1.6287\times 10^{-10}$ \\ \hline
$6$ & $88.074068137541119404825$ & $3.3656\times 10^{-10}$ \\ \hline
$7$ & $120.16339550734279907486$ & $4.9152\times 10^{-10}$ \\ \hline
$8$ & $157.16229901349984050153$ & $8.4092\times 10^{-10}$ \\ \hline
$9$ & $199.11594231603456321312$ & $9.8560\times 10^{-10}$ \\ \hline
$10$ & $245.98922143170974248778$ & $1.9757\times 10^{-9}$ \\ \hline
$20$ & $986.21021437494101997432$ & $3.6385\times 10^{-8}$ \\ \hline
$30$ & $2219.9108896264072174380$ & $2.6014\times 10^{-7}$ \\ \hline
$50$ & $6167.7527144043198013992$ & $4.2623\times 10^{-4}$ \\ \hline
\end{tabular}
\end{example}

\begin{example}
\label{example4}This example is taken from the list of test problems
presented in \cite{Pryce}.\ Consider the eigenvalue problem%
\begin{eqnarray*}
-u^{\prime \prime }+u &=&\lambda ru,\quad x\in \lbrack 0,1]\medskip \\
u(0) &=&u(1)=0\medskip ,
\end{eqnarray*}%
where

\begin{equation*}
r(x)=\left\{ 
\begin{tabular}{ll}
$0,$ & $x\in \left[ 0,\frac{1}{2}\right] $ \\ 
$1,$ & $x\in \left( \frac{1}{2},1\right] $%
\end{tabular}%
\right. .
\end{equation*}%
The exact characteristic equation for this problem is 
\begin{equation*}
\Phi (\lambda )=\tan \frac{\sqrt{\lambda -1}}{2}+\sqrt{\lambda -1}\tanh 
\frac{1}{2}=0,
\end{equation*}%
whereas the SPPS characteristic equation has the form%
\begin{equation*}
\Phi (\lambda )=u_{2}(1,\lambda )=0.
\end{equation*}%
The numerical result is presented in Table 5.

\bigskip 
\begin{tabular}{lll}
\multicolumn{3}{l}{\textbf{Table 5.} The exact eigenvalues from example \ref%
{example4}} \\ 
\multicolumn{3}{l}{and their absolute error obtained with $N=40$, $M=300000$}
\\ \hline\hline
$n$ & Exact $\lambda _{n}$ & SPPS abs. error \\ \hline
$0$ & $17.89793137541756$ & $1.5632\times 10^{-13}$ \\ \hline
$1$ & $98.16027543604447$ & $1.9401\times 10^{-11}$ \\ \hline
$2$ & $256.2710801437674$ & $2.7910\times 10^{-11}$ \\ \hline
$3$ & $493.2013196148296$ & $8.4422\times 10^{-11}$ \\ \hline
$4$ & $809.0540168683802$ & $1.6823\times 10^{-10}$ \\ \hline
$5$ & $1203.851208314645$ & $1.5561\times 10^{-10}$ \\ \hline
$6$ & $1677.599761311537$ & $7.0518\times 10^{-10}$ \\ \hline
$7$ & $2230.302360623276$ & $1.0313\times 10^{-9}$ \\ \hline
$8$ & $2861.960227826747$ & $3.1004\times 10^{-10}$ \\ \hline
$9$ & $3572.573982714352$ & $2.6569\times 10^{-9}$ \\ \hline
$10$ & $4362.143966674130$ & $2.3226\times 10^{-9}$ \\ \hline
$11$ & $5230.670380220453$ & $1.4316\times 10^{-6}$ \\ \hline
$12$ & $6178.153347366686$ & $5.9094\times 10^{-4}$ \\ \hline
$13$ & $7204.592948137174$ & $5.2852\times 10^{-4}$ \\ \hline
$14$ & $8309.989236037242$ & $2.4735\times 10^{-3}$ \\ \hline
\end{tabular}

Despite a large number of points used in this example for integrations, the
computational time for each obtained eigenvalue was around 20 sec.
\end{example}

\bigskip

\bigskip

\end{document}